\documentclass[12pt,a4paper]{article}
\usepackage[margin=2.5cm]{geometry}
\usepackage[english]{babel}
\usepackage[T1]{fontenc}
\usepackage{latexsym,amsmath,amsfonts,amssymb,amsthm,hyperref,tikz}

\numberwithin{equation}{section}
\newcounter{thmcounter}
\numberwithin{thmcounter}{section}
\theoremstyle{definition}
\newtheorem*{acknowledgements}{Acknowledgements}
\newtheorem{definition}[thmcounter]{Definition}
\newtheorem{remark}[thmcounter]{Remark}
\theoremstyle{plain}

\newtheorem{proposition}[thmcounter]{Proposition}
\newtheorem{theorem}[thmcounter]{Theorem}

\def\cA{\mathcal{A}}

\def\cE{\mathcal{E}}

\def\cJ{\mathcal{J}}
\def\cL{\mathcal{L}}

\def\cS{\mathcal{S}}
\def\C{\mathbb{C}}
\def\CP{\mathbb{CP}}
\def\N{\mathbb{N}}
\def\R{\mathbb{R}}
\def\T{\mathbb{T}}
\def\Z{\mathbb{Z}}
\def\1{\mathbf{1}}
\def\b{\mathrm{b}}

\def\SU{\mathrm{SU}}
\def\UN{\mathrm{U}}
\def\diag{\mathrm{diag}}
\def\Re{\mathrm{Re}}
\def\tr{\mathrm{tr}}
\def\ri{\mathrm{i}}
\def\loc{\mathrm{loc}}
\DeclareMathOperator{\sgn}{sgn}
\DeclareMathOperator{\ws}{s}

\begin{document}

\vspace*{.5cm}
\begin{center}
{\Large\bf
Trigonometric and elliptic Ruijsenaars--Schneider systems on the complex projective space}
\end{center}

\medskip
\begin{center}
L.~Feh\'er${}^{a,b}$ and T.F.~G\"orbe${}^a$\\

\bigskip
${}^a$Department of Theoretical Physics, University of Szeged\\
Tisza Lajos krt 84-86, H-6720 Szeged, Hungary\\
e-mail: tfgorbe@physx.u-szeged.hu

\medskip
${}^b$Department of Theoretical Physics, WIGNER RCP, RMKI\\
H-1525 Budapest, P.O.B.~49, Hungary\\
e-mail: lfeher@physx.u-szeged.hu
\end{center}

\medskip
\begin{abstract}
We present a direct construction of compact real forms of the trigonometric and elliptic
$n$-particle Ruijsenaars--Schneider systems whose completed center-of-mass phase space is
the complex projective space $\CP^{n-1}$ with the Fubini--Study symplectic structure.
These systems are labelled by an integer $p\in\{1,\dots,n-1\}$ relative prime to $n$ and a
coupling parameter $y$ varying in a certain punctured interval around $p\pi/n$. Our work
extends Ruijsenaars's pioneering study of compactifications that imposed the restriction
$0<y<\pi/n$, and also builds on an earlier derivation of more general compact trigonometric
systems by Hamiltonian reduction.
\end{abstract}

\newpage
\section{Introduction}
\label{sec:1}

The investigation of integrable systems of particles moving in one spatial dimension started
decades ago and persistently attracts intense attention due to the fascinating mathematics and
diverse physical applications of these systems, as reviewed in \cite{E,Nekr,OP,RuijKup,RuijR,vDVR}.
The Ruijsenaars--Schneider (RS) model \cite{RS,RuijCMP87} occupies a central position in this
family, since many other interesting models of Calogero--Moser--Sutherland and Toda type can
be obtained from it as various limits and analytic continuations \cite{RuijR}. The phase space
of these particle systems is usually the cotangent bundle of the configuration space, which is
never compact due to the infinite range of the canonical momenta. The standard RS Hamiltonian
depends on the momenta $\phi_k$ through the function $\cosh (\phi_k)$, but by analytic
continuation this may be replaced by $\cos (\phi_k)$, which effectively compactifies the
momenta on a circle. If the dependence on the position variables $x_k$ is also through a
periodic function, then the phase space can be taken to be a bounded set. This possibility
was examined in \cite{RIMS95}, where the Hamiltonian
\begin{equation}
H(x,\phi)=\sum_{k=1}^n\cos(\phi_k)\sqrt{\prod_{\substack{j=1\\(j\neq k)}}^n
\bigg[1-\frac{\sin^2y}{\sin^2(x_j-x_k)}\bigg]}
\label{I1}
\end{equation}
containing a real coupling parameter $0<y< \pi/2$ was considered.
Ruijsenaars called this the III$_\b$ system, with III referring to the trigonometric character
of the interaction, as in \cite{OP}, and the suffix standing for `bounded'. (One may also
introduce another real parameter into the III$_\b$ system, by replacing $\phi_k$ say by
$\beta \phi_k$.) The domain of the `angular position variables'
$\{(x_1,\dots, x_n)\}\subset[0,\pi]^n$ must be restricted in such a way that the
Hamiltonian \eqref{I1} is real and smooth. This may be ensured by prescribing
\begin{equation}
x_{i+1}-x_i>y\quad(i=1,\dots,n-1),\quad x_n-x_1<\pi-y,
\label{I2}
\end{equation}
which obviously implies Ruijsenaars's condition
\begin{equation}
0<y<\frac{\pi}{n}.
\label{I3}
\end{equation}
Although the Hamiltonian is then real, its flow is not complete on the naive phase space, because
it may reach the boundary $x_{k+1}-x_k=y$ (with $x_{k+n}\equiv x_k+\pi$) at finite time \cite{RIMS95}.
Completeness of the commuting flows is a crucial property of any bona fide integrable system, but
one cannot directly add the boundary to the phase space because that would not yield
a smooth manifold. One of the seminal results of \cite{RIMS95} is the solution of this
conundrum. In fact, Ruijsenaars constructed a symplectic embedding of the
center-of-mass phase space of the system into the complex projective space $\CP^{n-1}$, such
that the image of the embedding is a dense open submanifold and the Hamiltonian \eqref{I1} as well as its
commuting family extend to smooth functions on the full $\CP^{n-1}$.
As $\CP^{n-1}$ is compact, the corresponding Hamiltonian flows are complete.
The resulting `compactified trigonometric RS system' has been studied at the classical level
in detail \cite{RIMS95}, and after an initial exploration of the rank 1 case \cite{RuijKup},
its quantum mechanical version was also solved \cite{vDV}. These classical systems are self-dual
in the sense that their position and action variables can be exchanged by a canonical transformation
of order 4, somewhat akin to the mapping $(x,\phi) \mapsto (-\phi, x)$ for a free particle, and
their quantum mechanical versions enjoy the bispectral property \cite{RuijKup,vDV}.

The possibility of an analogous compactification
of the elliptic RS system having the Hamiltonian
\begin{equation}
H(x,\phi)=\sum_{k=1}^n\cos(\phi_k)\sqrt{\prod_{\substack{j=1\\(j\neq k)}}^n
\big[\ws(y)^2\big(\wp(y)-\wp(x_j-x_k)\big)\big]}
\label{I5}
\end{equation}
with
 functions
$\wp$ \eqref{wp} and $\ws$ \eqref{sigma}
was pointed out in \cite{RuijKup,RuijR}, but it was not
described in detail.

Even though it was only proved  \cite{RIMS95} that the restrictions \eqref{I2}, \eqref{I3}
are sufficient to allow compactification,  equation \eqref{I3} was customarily mentioned
in the literature  \cite{FK1,GN,RuijKup,RuijR,vDV} as a necessary condition for the systems to make sense.
However, in a recent work \cite{FKl} a completion of the III$_\b$ system on a compact phase space was
obtained for any generic parameter
\begin{equation}
0<y<\pi.
\label{I4}
\end{equation}
The paper \cite{FKl} relied on deriving compactified RS systems in the center-of-mass frame
via reduction of a `free system' on the quasi-Hamiltonian \cite{AMM} double $\SU(n) \times \SU(n)$.
This was achieved by setting
the relevant group-valued moment map equal to the constant matrix
$\mu_0(y)=\diag(e^{2\ri y},\dots,e^{2\ri y},e^{-2(n-1)\ri y})$,
and it makes perfect sense for any (generic) $y$.
The corresponding domain of the position variables depends on $y$ and differs
from the one posited in \eqref{I2}.
The possibility to relax the condition \eqref{I3} on $y$  also appeared in \cite{Bog}.

The principal motivation for our present work comes from the classification of the coupling parameter
$y$ found in \cite{FKl}. Namely, it turned out that the reduction is applicable except for a finite
set of $y$-values, and the rest of the set $(0,\pi)$ decomposes into two subsets, containing so-called
type (i) and type (ii) $y$-values. The `main reduced Hamiltonian' always takes the III$_\b$ form
\eqref{I1} on a dense open subset of the reduced phase space. In the type (i) cases the particles
cannot collide and the action variables of the reduced system naturally engender an isomorphism with
the Hamiltonian toric manifold $\CP^{n-1}$. In type (ii) cases, that exist for any $n>3$, the reduction
constraints admit solutions $(a,b)\in\SU(n)\times\SU(n)$ for which the eigenvalues of $a$ or $b$ are
not all distinct, entailing that the particles of the reduced system can collide.
For a detailed exposition of these succinct statements, the reader may consult \cite{FKl}.
We here only add the remark that the connected domain of the positions
always contains the equal-distance configuration
$x_{k+1} - x_k = \pi/n$ ($\forall k$)  for which
the number of negative factors in each product under the square root in \eqref{I1}
is $2 \lfloor n y/\pi\rfloor$
 if $0<y < \pi/2$ and
 $2 \lfloor n (\pi -y)/\pi \rfloor$  if $\pi/2 < y <\pi$.

This Letter is exclusively concerned with the type (i) cases just mentioned.
Our first goal is to reconstruct the corresponding compactification on $\CP^{n-1}$ using only
direct, elementary methods, i.e., not relying on reduction techniques.
Such construction was not known previously except for the special type (i) cases \eqref{I3}, which we shall
generalize. By doing so, we shall gain a better understanding of the structure of these trigonometric systems.
This part of the Letter fills Sections \ref{sec:2} and \ref{sec:3} that follow.
In Section \ref{sec:4}, we explain that the direct method is applicable to obtain type (i) compactifications
of the elliptic RS system as well. This new result extends the
remarks of Ruijsenaars \cite{RuijKup,RuijR}.

It would have been possible to organize our text differently, starting with the elliptic case
and then recovering the trigonometric systems as a limit.
We opted for first presenting the trigonometric case for the reason that in our hope this makes
the paper easier to understand, and also since this
actually follows our line of research.

Our results lead to several open questions and possible topics for future work that
will be outlined at the end of the paper.

\section{Embedding of the local phase space into $\CP^{n-1}$}
\label{sec:2}

In this section we first recall the local phase space of the III$_\b$ model from \cite{FKl},
and then present its symplectic embedding into $\CP^{n-1}$ in every type (i) case.

The III$_\b$ model can be thought of as $n$ interacting particles on the unit circle
with positions $\delta_k=e^{2\ri x_k}$.
We impose the condition $\prod_{k=1}^n\delta_k=1$, which means that we
work in the `center-of-mass frame',
and parametrize the positions as
\begin{equation}
\delta_1(\xi)=e^{\frac{2\ri}{n}\sum_{j=1}^nj\xi_j},\qquad
\delta_{k}(\xi)=e^{2\ri\xi_{k-1}}\delta_{k-1}(\xi),\quad k=2,\dots,n,
\label{delta}
\end{equation}
where $\xi$ belongs to a certain open subset $\cA_y^+$ inside the `Weyl alcove'
\begin{equation}
\cA=\{\xi\in\R^n\mid\xi_k\geq 0\ (k=1,\dots,n),\ \xi_1+\dots+\xi_n=\pi\}.
\label{Weyl}
\end{equation}
Note that $\cA$ is a simplex in the $(n-1)$-dimensional affine space
\begin{equation}
E=\{\xi\in\R^n\mid\xi_1+\dots+\xi_n=\pi\}.
\label{E}
\end{equation}
The local phase space can be described as the product manifold
\begin{equation}
P_y^\loc=\{(\xi,e^{\ri\theta})\mid\xi\in\cA_y^+,\ e^{\ri\theta}\in\T^{n-1}\},
\label{P-loc}
\end{equation}
where $\T^{n-1}$ is the $(n-1)$-torus, equipped with the standard symplectic form
\begin{equation}
\omega^\loc=\sum_{k=1}^{n-1}d\theta_k\wedge d\xi_k.
\label{om-loc}
\end{equation}
The dynamics is governed by the Hamiltonian
\begin{equation}
H_y^\loc(\xi,\theta)=\sum_{j=1}^n\cos(\theta_j-\theta_{j-1})\sqrt{ \prod_{m=j+1}^{j+n-1}
\biggl[ 1-\frac{\sin^2y}{\sin^2(\sum_{k=j}^{m-1}\xi_k)}\biggr]}.
\label{H_y^loc}
\end{equation}
Here, $\theta_0=\theta_n=0$ have been introduced and the indices are understood
modulo $n$, i.e.,
\begin{equation}
\xi_{m+n}=\xi_m, \quad\forall m.
\label{perconv}
\end{equation}
The product under the square
root is positive for every $\xi\in\cA_y^+$, and thus $H_y^\loc\in C^\infty(P_y^\loc)$.
This model was considered in \cite{FKl} for any $y$ chosen from the interval $(0,\pi)$
except the excluded values that satisfy $e^{2\ri my}=1$ for some $m=1,\dots,n$.

According to \cite{FKl}, there are two different kinds of intervals for $y$ to be in,
named type (i) and (ii). The type (i) couplings can be described as follows. For a
fixed positive integer $n\geq 2$, choose $p\in\{1,\dots,n-1\}$ to be a coprime to
$n$, i.e., $\gcd(n,p)=1$, and let $q$ denote the multiplicative inverse of $p$ in
the ring $\Z_n$, that is $pq\equiv 1\pmod{n}$. Then the parameter $y$ can take its
values according to either
\begin{equation}
\bigg(\frac{p}{n}-\frac{1}{nq}\bigg)\pi<y<\frac{p\pi}{n}
\qquad\text{or}\qquad
\frac{p\pi}{n}<y<\bigg(\frac{p}{n}+\frac{1}{(n-q)n}\bigg)\pi.
\label{typeI-y}
\end{equation}
For such a type (i) parameter $y$, the local configuration space $\cA_y^+$
is the interior of a simplex $\cA_y$ in $E$ \eqref{E}
bounded by the hyperplanes
\begin{equation}
\xi_j+\dots+\xi_{j+p-1}=y,\quad j=1,\dots,n,
\label{hyperplanes}
\end{equation}
where \eqref{perconv} is understood.
To give a more detailed description of $\cA_y$, we introduce
\begin{equation}
M=p\pi-ny,
\label{M}
\end{equation}
and note that \eqref{typeI-y} gives $M>0$ and $M<0$, respectively.
Then any $\xi\in\cA_y$ must satisfy
\begin{equation}
\sgn(M)(\xi_j+\dots+\xi_{j+p-1}-y)\geq 0,\quad j=1,\dots,n.
\label{3.11}
\end{equation}
In terms of the particle coordinates $x_k$, which are ordered as $x_{k+1} \geq x_k$ and
extended by the convention $x_{k+n} = x_k + \pi$, the above condition says that
\begin{equation}
x_{j+p}-x_j\geq y\quad\text{if}\ M>0
\quad\text{and}\quad
x_{j+p}-x_j\leq y\quad\text{if}\ M<0
\label{3.12}
\end{equation}
for every $j$.
Therefore the distances of the $p$-th neighbouring particles on the circle are constrained.
The $n$ vertices of the simplex $\cA_y$ are explicitly given in \cite{FKl} (Proposition 11
and Lemma 8 \emph{op. cit.}). Every vertex and thus $\cA_y$ itself lies inside the larger
simplex $\cA$ \eqref{Weyl}, entailing that $x_{j+1}-x_j$ possesses a positive lower bound
in each type (i) case.

The type (ii) cases correspond to those admissible $y$-values that do not satisfy
\eqref{typeI-y} for any $p$ relative prime to $n$. In such cases $\cA_y^+$ has a different
structure \cite{FKl}. Type (ii) cases exist for every $n\geq 4$. See Figure \ref{figure:1}
for an illustration.

\begin{figure}[h!]
\centering
\begin{tikzpicture}
\def\s{.9\textwidth}
\def\r{.2em}
\draw(1em,1em) node{$n=4$};
\draw (0,0)--({\s/3},0) ({2*\s/3},0)--({\s},0);
\draw[dashed] ({\s/3},0)--({2*\s/3},0);
\draw[black,fill=white]
(0,0) circle(\r) node[below,yshift=-1mm]{$0$}
(\s/4,0) circle(\r) node[below,yshift=-1mm]{$\displaystyle\frac{1}{4}$}
(\s/3,0) circle(\r) node[below,yshift=-1mm]{$\displaystyle\frac{1}{3}$}
(\s/2,0) circle(\r) node[below,yshift=-1mm]{$\displaystyle\frac{1}{2}$}
(2*\s/3,0) circle(\r) node[below,yshift=-1mm]{$\displaystyle\frac{2}{3}$}
(3*\s/4,0) circle(\r) node[below,yshift=-1mm]{$\displaystyle\frac{3}{4}$}
(\s,0) circle(\r) node[below,yshift=-1mm]{$1$};
\draw(1em,-3em) node{$n=5$};
\draw (0,-4em)--(\s/4,-4em) (\s/3,-4em)--(2*\s/3,-4em) (3*\s/4,-4em)--(\s,-4em);
\draw[dashed] (\s/4,-4em)--(\s/3,-4em) (2*\s/3,-4em)--(3*\s/4,-4em);
\draw[black,fill=white]
(0,-4em) circle(\r) node[below,yshift=-1mm]{$0$}
(\s/5,-4em) circle(\r) node[below,yshift=-1mm]{$\displaystyle\frac{1}{5}$}
(\s/4,-4em) circle(\r) node[below,yshift=-1mm]{$\displaystyle\frac{1}{4}$}
(\s/3,-4em) circle(\r) node[below,yshift=-1mm]{$\displaystyle\frac{1}{3}$}
(2*\s/5,-4em) circle(\r) node[below,yshift=-1mm]{$\displaystyle\frac{2}{5}$}
(\s/2,-4em) circle(\r) node[below,yshift=-1mm]{$\displaystyle\frac{1}{2}$}
(3*\s/5,-4em) circle(\r) node[below,yshift=-1mm]{$\displaystyle\frac{3}{5}$}
(2*\s/3,-4em) circle(\r) node[below,yshift=-1mm]{$\displaystyle\frac{2}{3}$}
(3*\s/4,-4em) circle(\r) node[below,yshift=-1mm]{$\displaystyle\frac{3}{4}$}
(4*\s/5,-4em) circle(\r) node[below,yshift=-1mm]{$\displaystyle\frac{4}{5}$}
(\s,-4em) circle(\r) node[below,yshift=-1mm]{$1$};
\draw(1em,-7em) node{$n=6$};
\draw (0,-8em)--(\s/5,-8em) (4*\s/5,-8em)--(\s,-8em);
\draw[dashed] (\s/5,-8em)--(4*\s/5,-8em);
\draw[black,fill=white]
(0,-8em) circle(\r) node[below,yshift=-1mm]{$0$}
(\s/6,-8em) circle(\r) node[below,yshift=-1mm]{$\displaystyle\frac{1}{6}$}
(\s/5,-8em) circle(\r) node[below,yshift=-1mm]{$\displaystyle\frac{1}{5}$}
(\s/4,-8em) circle(\r) node[below,yshift=-1mm]{$\displaystyle\frac{1}{4}$}
(\s/3,-8em) circle(\r) node[below,yshift=-1mm]{$\displaystyle\frac{1}{3}$}
(2*\s/5,-8em) circle(\r) node[below,yshift=-1mm]{$\displaystyle\frac{2}{5}$}
(\s/2,-8em) circle(\r) node[below,yshift=-1mm]{$\displaystyle\frac{1}{2}$}
(3*\s/5,-8em) circle(\r) node[below,yshift=-1mm]{$\displaystyle\frac{3}{5}$}
(2*\s/3,-8em) circle(\r) node[below,yshift=-1mm]{$\displaystyle\frac{2}{3}$}
(3*\s/4,-8em) circle(\r) node[below,yshift=-1mm]{$\displaystyle\frac{3}{4}$}
(4*\s/5,-8em) circle(\r) node[below,yshift=-1mm]{$\displaystyle\frac{4}{5}$}
(5*\s/6,-8em) circle(\r) node[below,yshift=-1mm]{$\displaystyle\frac{5}{6}$}
(\s,-8em) circle(\r) node[below,yshift=-1mm]{$1$};
\draw(1em,-11em) node{$n=7$};
\draw (0,-12em)--(\s/6,-12em) (\s/4,-12em)--(\s/3,-12em) (2*\s/5,-12em)--(3*\s/5,-12em)
(2*\s/3,-12em)--(3*\s/4,-12em) (5*\s/6,-12em)--(\s,-12em);
\draw[dashed] (\s/6,-12em)--(\s/4,-12em) (\s/3,-12em)--(2*\s/5,-12em)
(3*\s/5,-12em)--(2*\s/3,-12em) (3*\s/4,-12em)--(5*\s/6,-12em);
\draw[black,fill=white]
(0,-12em) circle(\r) node[below,yshift=-1mm]{$0$}
(\s/7,-12em) circle(\r) node[below,yshift=-1mm]{$\displaystyle\frac{1}{7}$}
(\s/6,-12em) circle(\r) node[below,yshift=-1mm]{$\displaystyle\frac{1}{6}$}
(\s/5,-12em) circle(\r) node[below,yshift=-1mm]{$\displaystyle\frac{1}{5}$}
(\s/4,-12em) circle(\r) node[below,yshift=-1mm]{$\displaystyle\frac{1}{4}$}
(2*\s/7,-12em) circle(\r) node[below,yshift=-1mm]{$\displaystyle\frac{2}{7}$}
(\s/3,-12em) circle(\r) node[below,yshift=-1mm]{$\displaystyle\frac{1}{3}$}
(2*\s/5,-12em) circle(\r) node[below,yshift=-1mm]{$\displaystyle\frac{2}{5}$}
(3*\s/7,-12em) circle(\r) node[below,yshift=-1mm]{$\displaystyle\frac{3}{7}$}
(\s/2,-12em) circle(\r) node[below,yshift=-1mm]{$\displaystyle\frac{1}{2}$}
(4*\s/7,-12em) circle(\r) node[below,yshift=-1mm]{$\displaystyle\frac{4}{7}$}
(3*\s/5,-12em) circle(\r) node[below,yshift=-1mm]{$\displaystyle\frac{3}{5}$}
(2*\s/3,-12em) circle(\r) node[below,yshift=-1mm]{$\displaystyle\frac{2}{3}$}
(5*\s/7,-12em) circle(\r) node[below,yshift=-1mm]{$\displaystyle\frac{5}{7}$}
(3*\s/4,-12em) circle(\r) node[below,yshift=-1mm]{$\displaystyle\frac{3}{4}$}
(4*\s/5,-12em) circle(\r) node[below,yshift=-1mm]{$\displaystyle\frac{4}{5}$}
(5*\s/6,-12em) circle(\r) node[below,yshift=-1mm]{$\displaystyle\frac{5}{6}$}
(6*\s/7,-12em) circle(\r) node[below,yshift=-1mm]{$\displaystyle\frac{6}{7}$}
(\s,-12em) circle(\r) node[below,yshift=-1mm]{$1$};
\end{tikzpicture}
\caption{The range of $y/\pi$ for $n=4,5,6,7$. The displayed numbers are excluded values.
Admissible values of $y$ form intervals of type (i) (solid) and type (ii) (dashed) couplings.}
\label{figure:1}
\end{figure}
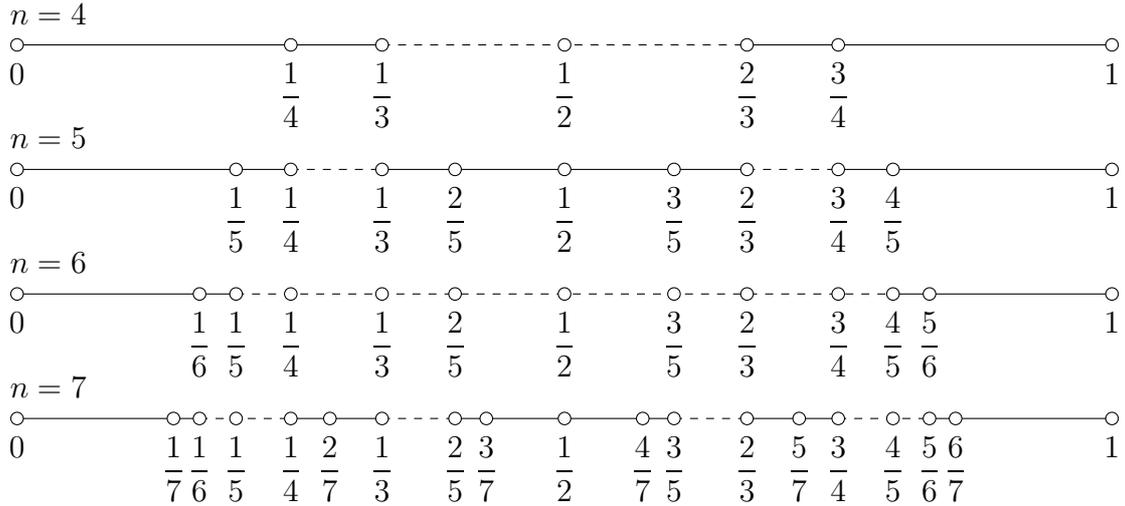

We further continue with the assumption that $y$ satisfies \eqref{typeI-y}.
Motivated by \cite{RIMS95,FK1}, we now introduce the map
\begin{equation}
\cE\colon\cA_y^+\times\T^{n-1}\to\C^n,\quad
(\xi,e^{\ri\theta})\mapsto (u_1,\dots,u_n)
\label{Emap}
\end{equation}
with the complex coordinates having the squared absolute values
\begin{equation}
|u_j|^2=\sgn(M)(\xi_j+\dots+\xi_{j+p-1}-y),\quad j=1,\dots,n,
\label{u-abs-squared}
\end{equation}
and the arguments
\begin{equation}
\arg(u_j)=\sgn(M)\sum_{k=1}^{n-1}\Omega_{j,k}\theta_k,\quad j=1,\dots,n-1,
\qquad\arg(u_n)=0,
\label{argu}
\end{equation}
where the $\Omega_{j,k}$ ($j,k=1,\dots,n-1$) are integers chosen in such a way that
\begin{equation}
\cE^\ast\bigg(\ri\sum_{j=1}^nd\bar u_j\wedge du_j\bigg)
=\sum_{k=1}^{n-1}d\theta_k\wedge d\xi_k.
\label{cE-sympl}
\end{equation}
In order for \eqref{cE-sympl} to be achieved $\Omega$ has to be the inverse transpose of
the $(n-1)\times(n-1)$ coefficient matrix of $\xi_1,\dots,\xi_{n-1}$ extracted from eqs.
\eqref{u-abs-squared} by applying $\xi_1+\dots+\xi_n=\pi$. In other words, the squared
absolute values $|u_j|^2$ are written as
\begin{equation}
|u_j|^2=\begin{cases}
\sgn(M)\big(\sum_{k=1}^{n-1}A_{j,k}\xi_k-y\big),&\text{if}\ 1\leq j\leq n-p,\\
\sgn(M)\big(\sum_{k=1}^{n-1}A_{j,k}\xi_k-y+\pi\big),&\text{if}\ n-p<j\leq n-1,
\end{cases}
\label{uabs}
\end{equation}
where $A$ stands for the above-mentioned coefficient matrix, which has the components
\begin{equation}
A_{j,k}=\begin{cases}
+1,&\text{if}\ 1\leq j\leq n-p\ \text{and}\ j\leq k<j+p,\\
-1,&\text{if}\ n-p<j\leq n-1\ \text{and}\ j+p-n\leq k<j,\\
0,&\text{otherwise}.
\end{cases}
\label{A_j,k}
\end{equation}
A close inspection of the structure of $A$ reveals that
\begin{equation}
\det(A)=(-1)^{(n-p)(p-1)}\prod_{j=1}^{n-p}A_{j,j+p-1}\prod_{k=1}^{p-1}A_{n-p+k,k}
=(-1)^{(n-p+1)(p-1)}=+1,
\label{det(A)}
\end{equation}
therefore $\Omega=(A^{-1})^\top$ exists and consists of integers, as required in \eqref{argu}.
Next, we give $\Omega$ explicitly.

\begin{proposition}
\label{prop:2.1}
The transpose of the inverse of the matrix $A$ \eqref{A_j,k} can be written as
\begin{equation}
\Omega=B-C,
\label{B-C}
\end{equation}
where $B$ is a $(0,1)$-matrix of size $(n-1)$ with zeros along certain diagonals given by
\begin{equation}
B_{m,k}=\begin{cases}
0,&\text{if}\ k-m\equiv \ell p\pmod{n}\ \text{for some}\
\ell\in\{1,\dots,n-q\},\\
1,&\text{otherwise},
\end{cases}
\label{B_m,k}
\end{equation}
and $C$ is also a binary matrix of size $(n-1)$ with zeros along columns given by
\begin{equation}
C_{m,k}=\begin{cases}
0,&\text{if}\ k\equiv \ell p\pmod{n}\ \text{for some}\
\ell\in\{1,\dots,n-q\},\\
1,&\text{otherwise}.
\end{cases}
\label{C_m,k}
\end{equation}
\end{proposition}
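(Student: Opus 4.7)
The strategy is to verify the matrix identity $A\Omega^\top=I_{n-1}$ directly; by transposition this is equivalent to $\Omega=(A^{-1})^\top$. Introduce the $\{0,1\}$-valued function $b\colon\Z/n\Z\to\{0,1\}$ with $b(s)=0$ precisely when $s\equiv\ell p\pmod n$ for some $\ell\in\{1,\dots,n-q\}$, so that \eqref{B_m,k}--\eqref{C_m,k} become $B_{m,k}=b(k-m)$ and $C_{m,k}=b(k)$ (with arguments taken modulo $n$), and hence
\[
\Omega_{m,k}=b(k-m)-b(k).
\]
Computing $(A\Omega^\top)_{j,m}=\sum_{k}A_{j,k}\Omega_{m,k}$ using the two cases of \eqref{A_j,k}, the claim $(A\Omega^\top)_{j,m}=\delta_{j,m}$ becomes
\[
G(j-m)-G(j)=\delta_{j,m}\quad\text{for }1\leq j\leq n-p,
\]
and
\[
H(j+p-n)-H(j+p-n-m)=\delta_{j,m}\quad\text{for }n-p<j\leq n-1,
\]
where $G(a):=\sum_{k=a}^{a+p-1}b(k\bmod n)$ and $H(a):=\sum_{k=a}^{a+n-p-1}b(k\bmod n)$ are the sliding-window sums of $b$ of lengths $p$ and $n-p$ on $\Z/n\Z$.

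The combinatorial heart of the argument is the pair of structural claims: $G$ is constant on $\{1,\dots,n-1\}$ and $G(0)$ exceeds that constant by $1$; symmetrically, $H$ is constant on $\{0,1,\dots,n-1\}\setminus\{p\}$ and $H(p)$ is one less than that constant. I would derive these from the telescoping relations $G(a+1)-G(a)=b(a+p)-b(a)$ and $H(a+1)-H(a)=b(a+n-p)-b(a)$ together with the set identities
\[
S-p\equiv(S\setminus\{n-1\})\cup\{0\}\pmod n,\qquad S+p\equiv(S\setminus\{p\})\cup\{p-1\}\pmod n,
\]
for $S:=\{p,2p,\dots,(n-q)p\}\bmod n$. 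Both rest on $(n-q)p\equiv-1\pmod n$ and $(n-q+1)p\equiv p-1\pmod n$, each an immediate consequence of $pq\equiv 1\pmod n$. The set identities force all but two of the increments $G(a+1)-G(a)$ and $H(a+1)-H(a)$ to vanish, and they pinpoint exactly the unique ``jump'' position of $G$ (at $a=0$) and of $H$ (at $a=p$), with jump size $\pm1$.

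Given the structure of $G$ and $H$, the two displayed identities are immediate. For the first, both arguments $j$ and $j-m\bmod n$ lie in the constant region $\{1,\dots,n-1\}$ of $G$ unless $m=j$, in which case $j-m\equiv 0\pmod n$ supplies the unit jump. For the second, setting $j':=j+p-n\in\{1,\dots,p-1\}$, the argument $j'-m\bmod n$ equals $p$ precisely when $m=j$, and then the unit jump of $H$ at $a=p$ produces the Kronecker delta with the correct sign. The main obstacle I foresee is the number-theoretic bookkeeping behind the two $\pm p$ shift identities for $S$, in particular verifying that $0$ and $p-1$ do not belong to $S$ so that the symmetric differences really have size $2$ with the claimed elements; once these are secured, the remainder of the proof is a clean mechanical reduction.
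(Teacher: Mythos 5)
Your proof is correct and follows essentially the same route as the paper's: your sliding-window sum satisfies $G(a)=p-I_a$, where $I_a=|S_a\cap S|$ is the intersection count the paper works with, and your $\pm p$ shift identities for $S$ are exactly the near-invariance of $S$ under translation by $p$ that underlies the paper's key relation $I_1=\dots=I_{n-1}=I_n+1$. Your explicit treatment of the rows $n-p<j\leq n-1$ via $H$ merely spells out what the paper dispatches as ``quite similar'' to the first case.
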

\begin{proof}
We start by presenting a useful auxiliary statement.
Let us introduce the subsets $S$ and $S_i$ of the ring $\Z_n$ as
\begin{equation}
S=\{\ell p\ (\bmod\ n)\mid\ell=1,\dots,n-q\},
\quad
S_i=\{i+\ell\ (\bmod\ n)\mid\ell=0,\dots,p-1\},
\end{equation}
for any $i\in\Z_n$.
Then define $I_i\in\N$ to be the number of elements in the intersection $S_i\cap S$.
Notice that $i\in S$ if and only if $(i+p)\in S$ except for $i\equiv(n-1)\equiv(n-q)p\pmod{n}$,
for which $(n-1)+p\equiv(n-q+1)p\pmod{n}$ does not belong to $S$.
It follows that
\begin{equation}
I_1=\dots=I_{n-1}=I_n+1.
\label{shiftid}
\end{equation}

Our aim is to show that $(A\Omega^\top)_{j,m}=\delta_{j,m}$ ($\forall j,m$)
with $\Omega$ defined by \eqref{B-C}-\eqref{C_m,k}. First, by the formula of $A$
\eqref{A_j,k} for any $1\leq j\leq n-p$ and $1\leq m\leq n-1$ we have
\begin{equation}
(A\Omega^\top)_{j,m}
=\sum_{k=1}^{n-1}A_{j,k}\Omega_{m,k}
=\sum_{k=j}^{j+p-1}\Omega_{m,k}
=\sum_{k=j}^{j+p-1}(B_{m,k}-C_{m,k}).
\label{}
\end{equation}
The definition of the matrices $B$ \eqref{B_m,k} and $C$ \eqref{C_m,k} gives directly that
\begin{equation}
\sum_{k=j}^{j+p-1}B_{m,k}=p-I_{j-m},
\qquad
\sum_{k=j}^{j+p-1}C_{m,k}=p-I_j.
\label{}
\end{equation}
By using \eqref{shiftid}, this readily implies that
$(A\Omega^\top)_{j,m}=\delta_{j,m}$ holds for the case at hand.

Second, for any $n-p<j\leq n-1$ and $1\leq m\leq n-1$ we have
\begin{equation}
(A\Omega^\top)_{j,m}
=\sum_{k=1}^{n-1}A_{j,k}\Omega_{m,k}
=\sum_{k=j+p-n}^{j-1}(-1)\Omega_{m,k}
=\sum_{k=j+p-n}^{j-1}(C_{m,k}-B_{m,k}).
\label{}
\end{equation}
From this point on the reasoning is quite similar to the previous case, and we obtain that
$(A\Omega^\top)_{j,m}=\delta_{j,m}$ always holds.
\end{proof}

To enlighten the geometric meaning of the map $\cE$ \eqref{Emap}, notice from
\eqref{u-abs-squared} that
\begin{equation}
\sum_{j=1}^n|u_j|^2=\sgn(M)\big(p(\xi_1+\dots+\xi_n)-ny\big)
=\sgn(M)\big(p\pi-ny\big)=|M|.
\label{}
\end{equation}
Then represent the complex projective space $\CP^{n-1}$ as
\begin{equation}
\CP^{n-1}=S_{|M|}^{2n-1}/\UN(1)
\label{coset}
\end{equation}
with
\begin{equation}
S_{|M|}^{2n-1}=\{(u_1,\dots,u_n)\in\C^n\mid|u_1|^2+\dots+|u_n|^2=|M|\}.
\label{sphere}
\end{equation}
Correspondingly, let
\begin{equation}
\pi_{|M|}\colon S_{|M|}^{2n-1}\to\CP^{n-1}
\label{}
\end{equation}
denote the natural projection and equip $\CP^{n-1}$ with the rescaled Fubini--Study
symplectic form $|M|\omega_{\text{FS}}$ characterized by the relation
\begin{equation}
\pi_{|M|}^\ast(|M|\omega_{\text{FS}})=\ri\sum_{j=1}^nd\bar u_j\wedge du_j,
\label{2.32}
\end{equation}
where the $u_j$'s are regarded as functions on $S^{2n-1}_{\vert M\vert }$.
It is readily seen from the definitions that the map
\begin{equation}
\pi_{|M|}\circ\cE\colon\cA_y^+\times\T^{n-1}\to\CP^{n-1}
\label{pi+cE}
\end{equation}
is smooth, injective and its image is the open submanifold for which
$\prod_{j=1}^n \vert u_j \vert^2 \neq 0$.
Equations \eqref{om-loc}, \eqref{cE-sympl} and \eqref{2.32}  together imply the symplectic property
\begin{equation}
(\pi_{|M|}\circ\cE)^\ast (\vert M\vert \omega_{\text{FS}}) = \omega^\loc,
\end{equation}
from which it follows that this map is an \emph{embedding}.

To summarize, in this section we have constructed the
symplectic diffeomorphism $\pi_{|M|}\circ\cE$ between the local phase space
$P_y^\loc$ \eqref{P-loc}
and the dense open submanifold of $\CP^{n-1}$ on which the product of the homogeneous
coordinates is nowhere zero.
If desired, the explicit formula of the smooth inverse mapping can be easily found as well.

\section{Global extension of the trigonometric Lax matrix}
\label{sec:3}

It was proved in \cite{FKl} with the aid of quasi-Hamiltonian reduction that the global
phase space of the III$_\b$ model is $\CP^{n-1}$ for the type (i) couplings, which we
continue to consider. Here, we utilize the symplectic embedding \eqref{pi+cE} to
construct a global Lax matrix on $\CP^{n-1}$ explicitly, starting from the local
RS Lax matrix defined on $\cA_y^+\times\T^{n-1}$. This issue was not investigated
previously except for the $p=1$ case of \eqref{typeI-y}, see \cite{RIMS95,FK1,FKl}.

The local Lax matrix $L_y^\loc(\xi,e^{\ri \theta})\in\SU(n)$ used in \cite{FKl} contains the
 trigonometric Cauchy matrix $C_y$ given with the help of \eqref{delta} by
\begin{equation}
C_y(\xi)_{j,\ell}=\frac{e^{\ri y}-e^{-\ri y}}
{e^{\ri y}\delta_j(\xi)^{1/2}\delta_\ell(\xi)^{-1/2}
-e^{-\ri y}\delta_j(\xi)^{-1/2}\delta_\ell(\xi)^{1/2}}.
\label{C_y}
\end{equation}
Thanks to the relation $\delta_k(\xi)= e^{2\ri x_k}$, this is equivalent to
\begin{equation}
C_y(\xi)_{j,\ell}=\frac{\sin(y)}{\sin(x_j-x_\ell+y)}.
\label{C_y-2}
\end{equation}
Then we have
\begin{equation}
L_y^\loc(\xi,e^{\ri\theta})_{j,\ell}=
C_y(\xi)_{j,\ell}v_j(\xi,y)v_\ell(\xi,-y)\rho(\theta)_\ell,
\qquad
\forall (\xi, e^{\ri \theta})\in \cA_y^+\times\T^{n-1},
\label{L_y^loc}
\end{equation}
where $\rho(\theta)_\ell=e^{\ri(\theta_{\ell-1}-\theta_\ell)}$
(applying $\theta_0=\theta_n=0$) and
\begin{equation}
v_\ell(\xi,\pm y)=\sqrt{z_\ell(\xi,\pm y)}\quad\text{with}\quad
z_\ell(\xi,\pm y)=\sgn(\sin(ny))\prod_{m=\ell+1}^{\ell+n-1}
\frac{\sin(\sum_{k=\ell}^{m-1}\xi_k\mp y)}{\sin(\sum_{k=\ell}^{m-1}\xi_k)}.
\label{z_ell(xi,pmy)}
\end{equation}
A key point \cite{FKl} (which is detailed below) is that $z_\ell(\xi,\pm y)$ is
positive for any $\xi \in \cA_y^+$. We note for clarity that $z_\ell$ and $v_\ell$
above differ from those in \cite{FKl} by a harmless multiplicative constant,
and also mention that $L_y^\loc$ is a specialization of (a similarity transform of)
the standard RS Lax matrix \cite{RuijR}.

The spectral invariants of $L_y^\loc$ \eqref{L_y^loc} yield a Poisson
commuting family of functional dimension $(n-1)$ \cite{RuijR,FKl}, containing
the Hamiltonian $H_y^\loc$ \eqref{H_y^loc} due to the equation
\begin{equation}
\Re\big(\tr L_y^\loc(\xi,e^{\ri\theta})\big)=H_y^\loc(\xi,\theta).
\label{Re-tr-L_y^loc}
\end{equation}
There are two important observations to be made here. First, for each
$1\leq\ell\leq n$, there is only one factor in $z_\ell(\xi,\pm y)$
\eqref{z_ell(xi,pmy)} that (up to sign) contains the sine of the squared
absolute value \eqref{u-abs-squared} of one of the complex variables in
its numerator:
\begin{itemize}
\item For $z_\ell(\xi,y)$, it is the factor corresponding to $m=\ell+p$,
whose numerator is
\begin{equation}
\sgn(M)\sin(|u_\ell|^2).
\label{+y-num}
\end{equation}
\item For $z_\ell(\xi,-y)$, it is the factor with $m=\ell+n-p$, whose the numerator is either
\begin{equation}
\sin(\pi-\sgn(M)|u_{\ell+n-p}|^2)=\sgn(M)\sin(|u_{\ell+n-p}|^2),
\quad\text{if}\ 1\leq\ell\leq p,
\label{-y-num-1}
\end{equation}or
\begin{equation}
\sin(\pi-\sgn(M)|u_{\ell-p}|^2)=\sgn(M)\sin(|u_{\ell-p}|^2),
\quad\text{if}\ p<\ell\leq n.
\label{-y-num-2}
\end{equation}
Here we made use of $\xi_1+\dots+\xi_n=\pi$, $\sin(\pi-\alpha)=\sin(\alpha)$
and $\sin(-\alpha)=-\sin(\alpha)$.
\end{itemize}
Second, the $(p-1)$ factors in $z_\ell(\xi,\pm y)$ with
$m<\ell+p$ and $m>\ell+n-p$, respectively, are strictly negative and the
factors corresponding to $m>\ell+p$ and $m<\ell+n-p$, respectively, are
strictly positive for all $\xi$ in the \emph{closed} simplex $\cA_y$.
In particular, for any $\xi\in\cA_y^+$ the sign of the $\xi$-dependent product in
\eqref{z_ell(xi,pmy)} equals $(-1)^{p-1}\sgn(M)=\sgn(\sin(ny))$, and therefore
\begin{equation}
z_\ell(\xi,\pm y)\geq 0,\quad\forall\xi\in\cA_y,\quad\ell=1,\dots,n.
\label{z-ell-pos}
\end{equation}
We saw that $z_\ell$ can only vanish due to the numerators \eqref{+y-num} and
\eqref{-y-num-1}, \eqref{-y-num-2}, respectively.
Consequently, in \eqref{z_ell(xi,pmy)} the positive square
root of $z_\ell(\xi,\pm y)$ can be taken for any $\xi\in\cA_y^+$.

Now notice that, for all $\xi\in \cA_y^+$, we have
\begin{equation}
v_j(\xi,y)=|u_j|w_j(\xi,y),\quad 1\leq j\leq n,
\label{v_ell(xi,y)}
\end{equation}
where the $w_j(\xi,y)$ are positive and smooth functions of the form
\begin{equation}
w_j(\xi,y)=\bigg[
\frac{\sin(|u_j|^2)}{|u_j|^2}
\frac{(-1)^{p-1}}{\sin(\sum_{k=j}^{j+p-1}\xi_k)}
\prod_{\substack{m=j+1\\(m\neq j+p)}}^{j+n-1}
\frac{\sin(\sum_{k=j}^{m-1}\xi_k-y)}{\sin(\sum_{k=j}^{m-1}\xi_k)}
\bigg]^{\tfrac{1}{2}}.
\label{w_j(xi,y)}
\end{equation}
Similarly, we have
\begin{equation}
v_\ell(\xi,-y)=\begin{cases}
|u_{\ell+n-p}|w_\ell(\xi,-y),&\text{if}\ 1\leq\ell\leq p,\\
|u_{\ell-p}|w_\ell(\xi,-y),&\text{if}\ p<\ell\leq n
\end{cases}
\label{v_ell(xi,-y)}
\end{equation}
with the positive and smooth functions
\begin{equation}
w_\ell(\xi,-y)=\bigg[
\frac{\sin(|u_{\ell+n-p}|^2)}{|u_{\ell+n-p}|^2}
\frac{(-1)^{p-1}}{\sin(\sum_{k=\ell}^{\ell+n-p-1}\xi_k)}
\prod_{\substack{m=\ell+1\\(m\neq\ell+n-p)}}^{\ell+n-1}
\frac{\sin(\sum_{k=\ell}^{m-1}\xi_k+y)}{\sin(\sum_{k=\ell}^{m-1}\xi_k)}
\bigg]^{\tfrac{1}{2}}
\label{w_ell(xi,-y)-1}
\end{equation}
for $1\leq\ell\leq p$, and
\begin{equation}
w_\ell(\xi,-y)=\bigg[
\frac{\sin(|u_{\ell-p}|^2)}{|u_{\ell-p}|^2}
\frac{(-1)^{p-1}}{\sin(\sum_{k=\ell}^{\ell+n-p-1}\xi_k)}
\prod_{\substack{m=\ell+1\\(m\neq\ell+n-p)}}^{\ell+n-1}
\frac{\sin(\sum_{k=\ell}^{m-1}\xi_k+ y)}{\sin(\sum_{k=\ell}^{m-1}\xi_k)}
\bigg]^{\tfrac{1}{2}}
\label{w_ell(xi,-y)-2}
\end{equation}
for $p<\ell\leq n$.

The relation \eqref{uabs} allows us to express the $\xi_k$ in terms
of the complex variables for $k=1,\dots,n-1$ as
\begin{equation}
\xi_k(u)=
\sum_{j=1}^{n-1}\Omega_{j,k}\big(\sgn(M)|u_j|^2+c_j\big),\quad\text{with}\
c_j=\begin{cases}y,&\text{if}\ 1\leq j\leq n-p,\\
y-\pi,&\text{if}\ n-p<j\leq n-1,
\end{cases}
\label{xi(u)}
\end{equation}
and $\xi_n(u)=\pi-\xi_1(u)-\dots-\xi_{n-1}(u)$.
These formulas extend to $\UN(1)$-invariant smooth functions on $S_{|M|}^{2n-1}$,
which represent smooth functions on $\CP^{n-1}$ on account of \eqref{coset}.
By applying these, the above expressions $w_j(\xi(u), \pm y)$ $(j=1,\dots,n)$
\emph{give rise to smooth functions on $\CP^{n-1}$}.

\begin{definition}
\label{def:3.1}
By setting $\theta_k=0$ $(\forall k)$ in the local Lax matrix $L_y^\loc$
\eqref{L_y^loc} with $y$ \eqref{typeI-y}, we define the functions
$\Lambda_{j,\ell}^y\colon\cA_y^+\to\R$ ($j,\ell=1,\dots,n$) via the equations
\begin{equation}
\Lambda_{j,j+p}^y(\xi)=L_y^\loc(\xi,\1_{n-1})_{j,j+p},\quad 1\leq j\leq n-p,
\label{L_y^loc_j,j+p}
\end{equation}
\begin{equation}
\Lambda_{j,j+p-n}^y(\xi)=L_y^\loc(\xi,\1_{n-1})_{j,j+p-n},\quad n-p<j\leq n,
\label{L_y^loc_j,j+p-n}
\end{equation}
\begin{equation}
\Lambda_{j,\ell}^y(\xi)=L_y^\loc(\xi,\1_{n-1})_{j,\ell}(|u_j||u_{\ell+n-p}|)^{-1},\quad
1\leq j\leq n,\ 1\leq\ell\leq p\quad (\ell\neq j+p-n),
\label{L_y^loc_j,ell-1}
\end{equation}
\begin{equation}
\Lambda_{j,\ell}^y(\xi)=L_y^\loc(\xi,\1_{n-1})_{j,\ell}(|u_j||u_{\ell-p}|)^{-1},\quad
1\leq j\leq n,\ p<\ell\leq n\quad (\ell\neq j+p).
\label{L_y^loc_j,ell-2}
\end{equation}
\end{definition}

The foregoing results lead to explicit formulas for $\Lambda_{j,\ell}^y$
(see Appendix \ref{sec:A}). Using the identification \eqref{coset}
and \eqref{xi(u)}, it is readily seen that the $\Lambda^y_{j,\ell}(\xi(u))$
given by Definition \ref{def:3.1} extend to smooth functions on $\CP^{n-1}$.

\begin{remark}
\label{rem:3.2}
The explicit formulas of $\Lambda_{j,\ell}^y(\xi(u))$ contain products of square roots
of strictly positive functions depending on $|u_k|^2\in C^\infty(S^{2n-1}_{|M|})^{\UN(1)}$
for $k=1,\dots,n$. In particular, they contain the square root of the function $J$ given by
\begin{equation}
J(|u_k|^2)=\frac{\sin(|u_k|^2)}{|u_k|^2},
\label{J1}
\end{equation}
which remains smooth (even real-analytic) at $|u_k|^2=0$ and is positive since
we have $0\leq|u_k|^2\leq|M|<\pi$. Indeed, $|M|<\pi/q$ and $|M|<\pi/(n-q)$,
respectively, for the two intervals of the type (i) couplings in \eqref{typeI-y}.
\end{remark}

The above observations allow us to introduce the following functions,
which will be used to construct the global Lax matrix.
\begin{definition}
\label{def:3.3}
For $M>0$ \eqref{M}, define the smooth functions $L_{j,\ell}^{y,+}\colon\CP^{n-1}\to\C$ by
\begin{align}
L_{j,\ell}^{y,+}\circ \pi_{\vert M\vert}(u)&=\begin{cases}
\Lambda_{j,j+p}^y(\xi(u)),&\text{if}\ 1\leq j\leq n-p,\ \ell=j+p,\\
\Lambda_{j,j+p-n}^y(\xi(u)),&\text{if}\ n-p<j\leq n,\ \ell=j+p-n,\\
\bar u_ju_{\ell+n-p}\Lambda_{j,\ell}^y(\xi(u)),&\text{if}\ 1\leq j\leq n,\ 1\leq\ell\leq p,\ \ell\neq j+p-n,\\
\bar u_ju_{\ell-p}\Lambda_{j,\ell}^y(\xi(u)),&\text{if}\ 1\leq j\leq n,\ p<\ell\leq n,\ \ell\neq j+p,
\end{cases}
\label{L^y+}
\end{align}
where $u$ varies in $S^{2n-1}_{\vert M\vert}$. Then, for $M<0$, define
$L_{j,\ell}^{y,-}\colon\CP^{n-1}\to\C$ by
\begin{equation}
L_{j,\ell}^{y,-}\circ \pi_{\vert M\vert} (u)=L_{j,\ell}^{y,+}\circ \pi_{\vert M\vert}(\bar u),
\label{L^y-}
\end{equation}
referring to the right-hand-side of \eqref{L^y+}
with the understanding that now $y>p\pi/n$.
\end{definition}
Next, we prove that the matrices $L_y^\loc$ and $L^{y,\pm}\circ\pi_{|M|}\circ\cE$,
are similar and can be
transformed into each other by a unitary matrix. This is one of our main results.

\begin{theorem}
\label{theor:3.4}
The smooth matrix function $L^{y,\pm}\colon\CP^{n-1}\to\C^{n\times n}$ with components
$L_{j,\ell}^{y,\pm}$ given by \eqref{L^y+},\eqref{L^y-} satisfies the following identity
\begin{equation}
(L^{y,\pm}\circ\pi_{|M|}\circ\cE)(\xi,e^{\ri\theta})
=\Delta(e^{\ri\theta})^{-1}L_y^\loc(\xi,e^{\ri\theta})\Delta(e^{\ri\theta}),\quad
\forall(\xi,e^{\ri\theta})\in\cA_y^+\times\T^{n-1},
\label{L^y-circ-cE}
\end{equation}
where $\Delta(e^{\ri\theta})=\diag(\Delta_1,\dots,\Delta_n)\in\UN(n)$ with
\begin{equation}
\Delta_j=\exp\bigg(\ri\sum_{k=1}^{n-1}\Omega_{j,k}\theta_k\bigg),
\quad j=1,\dots,n-1,\quad \Delta_n=1.
\label{Delta}
\end{equation}
Consequently, $L^{y,\pm}(\pi_{\vert M\vert}( u))\in\SU(n)$ for every
$u\in S^{2n-1}_{\vert M\vert} $, and $L^{y,\pm}$ provides an extension
of the local Lax matrix $L_y^\loc$ \eqref{L_y^loc} to the global phase
space $\CP^{n-1}$.
\end{theorem}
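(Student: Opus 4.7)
The plan is to verify \eqref{L^y-circ-cE} entrywise by separating the $\xi$-dependent amplitudes from the $\theta$-dependent phases and reducing the matrix equation to a single scalar phase identity. I would treat the case $M>0$ in detail; the $M<0$ case then follows from \eqref{L^y-} by the substitution $u\mapsto\bar u$. Throughout, let $\langle k\rangle\in\{1,\dots,n\}$ denote the representative of $k$ modulo $n$, and extend $\Omega_{n,k}:=0$ consistently with $\Delta_n=1$.

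The key structural observation is that in \eqref{L_y^loc} the only $\theta$-dependent factor is $\rho(\theta)_\ell=e^{\ri(\theta_{\ell-1}-\theta_\ell)}$, so
\[
L_y^\loc(\xi,e^{\ri\theta})_{j,\ell}=L_y^\loc(\xi,\1_{n-1})_{j,\ell}\,\rho(\theta)_\ell.
\]
For $M>0$ the map $\cE$ gives $u_j=|u_j|\Delta_j$ and $\bar u_j=|u_j|\Delta_j^{-1}$. Substituting these into the four branches of Definition \ref{def:3.3}, and using \eqref{L_y^loc_j,j+p}--\eqref{L_y^loc_j,ell-2} to rewrite each $\Lambda^y_{j,\ell}(\xi)$ as $L_y^\loc(\xi,\1_{n-1})_{j,\ell}$ (divided, in cases 3--4, by $|u_j||u_{\langle\ell-p\rangle}|$, which cancels against the moduli of $\bar u_j u_{\langle\ell-p\rangle}$), every case reduces to a single scalar phase identity:
\[
\frac{\Delta_\ell}{\Delta_j}=e^{\ri(\theta_\ell-\theta_{\ell-1})}\quad\text{whenever}\ \ell=\langle j+p\rangle,\qquad(\star)
\]
with the conventions $\theta_0=\theta_n=0$. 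In the sparse cases 1--2 this is immediate; in the generic cases 3--4 the $\bar u_j u_{\langle\ell-p\rangle}$ prefactor contributes the net phase $\Delta_j^{-1}\Delta_{\langle\ell-p\rangle}$, and matching against $\Delta_j^{-1}\Delta_\ell\rho(\theta)_\ell$ yields $(\star)$ after renaming $j\leftrightarrow\langle\ell-p\rangle$.

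The heart of the argument is to establish $(\star)$, for which I would invoke Proposition \ref{prop:2.1}. Since $C_{m,k}$ depends only on $k$, the difference $\Omega_{\langle j+p\rangle,k}-\Omega_{j,k}$ equals $B_{\langle j+p\rangle,k}-B_{j,k}$. By \eqref{B_m,k}, $B_{m,k}=0$ exactly when $k-m$ lies in the set $S:=\{p,2p,\dots,(n-q)p\}\pmod n$, and the shift $m\mapsto m+p$ translates this excluded set to $S+p=\{2p,\dots,(n-q+1)p\}$. The symmetric differences are $S\setminus(S+p)=\{p\}$ and $(S+p)\setminus S=\{(n-q+1)p\}$, and the congruence $pq\equiv 1\pmod n$ yields $(n-q+1)p\equiv p-1\pmod n$. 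Consequently
\[
\Omega_{\langle j+p\rangle,k}-\Omega_{j,k}=\delta_{k,\langle j+p\rangle}-\delta_{k,\langle j+p-1\rangle},\qquad 1\le k\le n-1,
\]
and multiplying by $\ri\theta_k$, summing over $k$, and exponentiating gives $(\star)$.

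The assertion that $L^{y,\pm}(\pi_{|M|}(u))\in\SU(n)$ for every $u\in S^{2n-1}_{|M|}$ then follows by density: $L_y^\loc\in\SU(n)$ and $\Delta(e^{\ri\theta})\in\UN(n)$ imply via \eqref{L^y-circ-cE} that $L^{y,\pm}$ takes values in $\SU(n)$ on the dense open image of $\pi_{|M|}\circ\cE$, and smoothness of $L^{y,\pm}$ on all of $\CP^{n-1}$ (granted by Remark \ref{rem:3.2}) propagates this to the full compactification. The main obstacle is the bookkeeping of boundary cases in the proof of $(\star)$, namely when $\langle j+p\rangle=n$ or $j=n$: here one needs the conventions $\Omega_{n,k}\equiv 0$, $\Delta_n=1$, and $\delta_{k,0}=\delta_{k,n}=0$ for $k\in\{1,\dots,n-1\}$ to make the symmetric-difference computation degenerate consistently. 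The algebraic content is uniform, but the modular arithmetic calls for care.
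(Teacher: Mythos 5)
Your proposal is correct and follows essentially the same route as the paper: both reduce \eqref{L^y-circ-cE} to the phase recursion $\Delta_j=\Delta_{\langle j+p\rangle}\rho_{\langle j+p\rangle}$ (your $(\star)$) and then verify it from Proposition \ref{prop:2.1}, using that $C$ depends only on the column index so that everything hinges on differences of $B$-entries. The only difference is presentational: the paper checks the resulting coefficient conditions in five separate cases, whereas you package them into the single identity $\Omega_{\langle j+p\rangle,k}-\Omega_{j,k}=\delta_{k,\langle j+p\rangle}-\delta_{k,\langle j+p-1\rangle}$ via the symmetric difference $S\triangle(S+p)=\{p,p-1\}$, which is a clean and correct unification.
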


\begin{proof}
The form of the local Lax matrix $L_y^\loc$ \eqref{L_y^loc} and Definitions
\ref{def:3.1} and \ref{def:3.3} show that \eqref{L^y-circ-cE} is equivalent
to the equations
\begin{equation}
\Delta_j=\begin{cases}
\Delta_{j+p}\rho_{j+p},&\text{if}\quad 1\leq j\leq n-p,\\
\Delta_{j+p-n}\rho_{j+p-n},&\text{if}\quad n-p<j\leq n.
\end{cases}
\label{Delta-recursion}
\end{equation}
The two sides of \eqref{Delta-recursion} can be written as exponentials of linear
combinations of the variables $\theta_k$ $(1\leq k\leq n-1)$. We next spell out the
relations that ensure the exact matching of the coefficients of the $\theta_k$ in
these exponentials. Plugging the components of $\Delta$ and $\rho$ into
\eqref{Delta-recursion}, the case $1\leq j<n-p$ gives
\begin{equation}
\begin{split}
\Omega_{j,j+p-1}&=\Omega_{j+p,j+p-1}+1,\quad(\text{coefficients of}\ \theta_{j+p-1})\\
\Omega_{j,j+p}&=\Omega_{j+p,j+p}-1,\quad(\text{coefficients of}\ \theta_{j+p})\\
\Omega_{j,k}&=\Omega_{j+p,k},\quad(\text{coefficients of}\ \theta_k,\ k\neq j+p-1,j+p),
\end{split}
\label{Omega-1}
\end{equation}
while for $j=n-p$ we get
\begin{equation}
\begin{split}
\Omega_{n-p,n-1}&=1,\quad(\text{coefficients of}\ \theta_{n-1})\\
\Omega_{n-p,k}&=0,\quad(\text{coefficients of}\ \theta_k,\ k\neq n-1).
\end{split}
\label{Omega-2}
\end{equation}
The case $n-p<j<n$ (and $p>1$) leads to
\begin{equation}
\begin{split}
\Omega_{j,j+p-n-1}&=\Omega_{j+p-n,j+p-n-1}+1,\quad(\text{coefficients of}\ \theta_{j+p-n-1})\\
\Omega_{j,j+p-n}&=\Omega_{j+p-n,j+p-n}-1,\quad(\text{coefficients of}\ \theta_{j+p-n})\\
\Omega_{j,k}&=\Omega_{j+p-n,k},\quad(\text{coefficients of}\ \theta_k,\ k\neq j+p-n-1,j+p-n).
\end{split}
\label{Omega-3}
\end{equation}
For $j=n$ there are two possibilities. If $p=1$ then we obtain
\begin{equation}
\begin{split}
\Omega_{1,1}&=1,\quad(\text{coefficients of}\ \theta_1)\\
\Omega_{1,k}&=0,\quad(\text{coefficients of}\ \theta_k,\ k\neq 1),
\end{split}
\label{Omega-4}
\end{equation}
and if $p>1$ then we require
\begin{equation}
\begin{split}
\Omega_{p,p-1}&=-1,\quad(\text{coefficients of}\ \theta_{p-1})\\
\Omega_{p,p}&=1,\quad(\text{coefficients of}\ \theta_p)\\
\Omega_{p,k}&=0,\quad(\text{coefficients of}\ \theta_k,\ k\neq p-1,p).
\end{split}
\label{Omega-5}
\end{equation}
Using the explicit formula given by Proposition \ref{prop:2.1},
we now show that $\Omega$ satisfies \eqref{Omega-1}.
Since $\Omega_{j,k}=B_{j,k}-C_{j,k}$ for all $j,k$, where $B_{j,k}$ \eqref{B_m,k}
depends on $(k-j)$ and $C_{j,k}$ \eqref{C_m,k} depends only on $k$, the equations
\eqref{Omega-1} reduce to
\begin{equation}
\begin{split}
B_{j,j+p-1}&=B_{j+p,j+p-1}+1,\\
B_{j,j+p}&=B_{j+p,j+p}-1,\\
B_{j,k}&=B_{j+p,k},\quad k\neq j+p-1,j+p.
\end{split}
\label{B-1}
\end{equation}
The first equation holds, because $(j+p-1)-j=p-1\equiv (n-q+1)p\pmod{n}$
implies $B_{j,j+p-1}=1$ and $(j+p-1)-(j+p)=-1\equiv (n-q)p\pmod{n}$ implies
$B_{j+p,j+p-1}=0$. For the second equation, we plainly have
$B_{j,j+p}=0$, and $(j+p)-(j+p)=0\equiv np\pmod{n}$ gives $B_{j+p,j+p}=1$.
Regarding the third equation, notice that $B_{j,k}=0$ in \eqref{B-1} when
$k-j\equiv\ell p\pmod{n}$ for some $\ell\in\{2,\dots,n-q\}$, and then
$B_{j+p,k}=0$ holds, too. Conversely, $B_{j+p,k}=0$ in \eqref{B-1} means
that $(k-j)-p\equiv\ell p\pmod{n}$ for some $\ell\in\{1,\dots,n-q-1\}$,
from which $(k-j)\equiv(\ell+1)p\pmod{n}$ and thus $B_{j,k}=0$ follows.
As $B$ is a $(0,1)$-matrix, we conclude that \eqref{B-1} is valid.
Proceeding in a similar manner, we have verified the rest of the relations
\eqref{Omega-2}--\eqref{Omega-5} as well. Since the relations
\eqref{Omega-1}--\eqref{Omega-5} imply \eqref{Delta-recursion},
the proof is complete.
\end{proof}

It is an immediate consequence of Theorem \ref{theor:3.4} that the spectral invariants of the global Lax matrix
$L^{y, \pm} \in C^\infty(\CP^{n-1}, \SU(n))$ yield a Liouville integrable system.
Because of \eqref{Re-tr-L_y^loc} the corresponding Poisson commuting family contains
the extension of the III$_\b$ Hamiltonian $H_y^\loc$ to $\CP^{n-1}$ for any type (i)
coupling. The self-duality of this compactified RS system was established in \cite{FKl},
and it will be studied in more detail elsewhere.

\section{Compact forms of the elliptic RS system}
\label{sec:4}

In this section we explain that type (i) compactifications of the elliptic RS system
can be constructed in exactly the same way as we saw for the trigonometric system.
This is due to the fact that the local elliptic Lax matrix is built from
the $\ws$-function \eqref{sigma} similarly as its trigonometric counterpart is built
from the sine function, and on the real axis these two functions have the same
zeros, signs, parity and antiperiodicity property.

We start by recalling some formulas of the relevant elliptic functions.
First, let $\omega,\omega'$ stand for the half-periods of the Weierstrass $\wp$
function defined by
\begin{equation}
\wp(z;\omega,\omega')
=\frac{1}{z^2}+\sum_{\substack{m,m'=-\infty\\(m,m')\neq(0,0)}}^\infty
\bigg[\frac{1}{(z-\omega_{m,m'})^2}-\frac{1}{\omega_{m,m'}^2}\bigg],
\label{wp}
\end{equation}
with $\omega_{m,m'}=2m\omega+2m'\omega'$. We adopt the convention
$\omega,-\ri\omega'\in(0,\infty)$, which ensures that $\wp$ is positive
on the real axis. Next, introduce the following `$\ws$-function':
\begin{equation}
\ws(z;\omega,\omega')
=\frac{2\omega}{\pi}\sin\!\Bigl(\frac{\pi z}{2\omega}\Bigr)
\prod_{m=1}^\infty\bigg[1+\frac{\sin^2(\pi z/(2\omega))}{\sinh^2(m\pi|\omega'| /\omega)}\bigg],
\label{sigma}
\end{equation}
related to the Weierstrass $\sigma$ and $\zeta$ functions by
$\ws(z)=\sigma(z)\exp(-\eta z^2/(2\omega))$ with the constant $\eta=\zeta(\omega)$.
A useful identity connecting $\wp$ and $\ws$ is
\begin{equation}
\frac{\ws(z+z')\ws(z-z')}{\ws^2(z)\ws^2(z')}=\wp(z')-\wp(z),
\qquad z, z'\in \C.
\label{wp-sigma}
\end{equation}
The $\ws$-function is odd, has simple zeros at $\omega_{m,m'}$ $(m,m'\in\Z)$ and
enjoys the scaling property $\ws(t z;t\omega,t\omega')=t\ws(z;\omega,\omega')$.
From now on we take
\begin{equation}
\omega=\frac{\pi}{2},
\end{equation}
whereby $\ws(z+\pi)=-\ws(z)$ holds as well. The trigonometric limit is obtained according to
\begin{equation}
\lim_{-\ri\omega'\to\infty}\wp(z;\pi/2,\omega')=\frac{1}{\sin^2(z)}-\frac{1}{3},\quad
\lim_{-\ri\omega'\to\infty}\ws(z;\pi/2,\omega')=\sin(z).
\label{trig-limit}
\end{equation}

Let us now pick a type (i) coupling parameter $y$ \eqref{typeI-y} and choose
the domain of the dynamical variables to be the same $\cA_y^+\times\T^{n-1}$
as in the trigonometric case. Then consider the following IV$_\b$ variant of
the standard \cite{RuijCMP87,RuijR} elliptic RS Lax matrix:
\begin{equation}
L_y^\loc(\xi,e^{\ri\theta}\vert\lambda)_{j,\ell}
=\frac{\ws(y)}{\ws(\lambda)}
\frac{\ws(x_j-x_\ell+\lambda)}{\ws(x_j-x_\ell+y)}
v_j(\xi,y)v_\ell(\xi,-y)\rho(\theta)_\ell,
\,\,\, \forall(\xi,e^{\ri\theta})\in\cA_y^+\times\T^{n-1},
\label{L_y^loc-IV}
\end{equation}
where $\lambda\in\C\setminus\{\omega_{m,m'}:m,m'\in\Z\}$ is a spectral parameter
and $v_\ell(\xi,\pm y)=\sqrt{z_\ell(\xi,\pm y)}$ with
\begin{equation}
z_\ell(\xi,\pm y)=\sgn(\ws(ny))\prod_{m=\ell+1}^{\ell+n-1}
\frac{\ws(\sum_{k=\ell}^{m-1}\xi_k\mp y)}{\ws(\sum_{k=\ell}^{m-1}\xi_k)}.
\label{z_ell(xi,pmy)-IV}
\end{equation}
These formulas are to be compared with the trigonometric case. Since $\ws(z)$ and
$\sin(z)$ have matching properties on the real line, we can repeat the arguments
presented in Section \ref{sec:3} to verify that $z_\ell(\xi,\pm y)>0$ for every
$\xi\in\cA_y^+$. Taking positive square roots, and applying the relation $x_{k+1}-x_k=\xi_k$
to express $x_j-x_\ell$ in terms of $\xi$, we conclude that the above local Lax matrix
is a smooth function on $\cA_y^+\times\T^{n-1}$ for every allowed value of the spectral
parameter. The fact that it is a specialization of the standard elliptic Lax matrix
ensures \cite{RuijCMP87,RuijR} that its characteristic polynomial generates $(n-1)$
independent real Hamiltonians in involution with respect to the symplectic form \eqref{om-loc}.
Indeed, the characteristic polynomial has the form
\begin{equation}
\det\big(L_y^\loc(\xi,e^{\ri\theta}\vert\lambda)-\alpha\1_n\big)
=\sum_{k=0}^n(-\alpha)^{n-k}c_k(\lambda,y)\cS_k^\loc(\xi,e^{\ri\theta},y),
\label{char1}\end{equation}
where the functions $\cS_k^\loc$ as well as their real and imaginary parts Poisson
commute, and $\Re(\cS_{k}^\loc)$ for $k=1,\dots, n-1$ are functionally independent.
Explicit formulas of the $c_k$ (that do not depend on the phase space variables)
and $\cS_k^\loc$ (that do not depend on $\lambda$) can be found in \cite{RuijCMP87,RuijR}.
The function $\Re(\cS^\loc_1)$ is the RS Hamiltonian of IV$_\b$ type
\begin{equation}
\Re\big(\tr L_y^\loc(\xi,e^{\ri\theta}\vert\lambda)\big)
=\sum_{j=1}^n\cos(\theta_j-\theta_{j-1})\sqrt{ \prod_{m=j+1}^{j+n-1}
\left[\ws(y)^2(\wp(y)-\wp(\textstyle\sum_{k=j}^{m-1}\xi_k))\right]}.
\label{H_y^loc-IV}
\end{equation}

We note in passing that in Ruijsenaars's papers \cite{RuijCMP87,RuijR} one finds
the elliptic Lax matrix $VL_y^\loc V^{-1}$, where $V$ is the diagonal matrix
$V=\rho(\theta)\diag(v_1(\xi,-y),\dots,v_n(\xi,-y))$. This difference is
irrelevant, since it has no effect on the generated spectral invariants.
Another difference is that we work in the center-of-mass frame.

Now the complete train of thought applied in the previous section
remains valid if we simply replace the sine function with the $\ws$-function everywhere.
In particular, the direct analogues of the formulas \eqref{v_ell(xi,y)}--\eqref{w_ell(xi,-y)-2}
hold with smooth functions $w_k(\xi,\pm y)>0$, for $\xi\in\cA_y$. Due to this fact,
we can introduce a smooth elliptic Lax matrix defined on the global phase space $\CP^{n-1}$.
The subsequent definition refers to the explicit formulas of Appendix \ref{sec:A},
which in the elliptic case contain the function
\begin{equation}
\cJ(|u_k|^2)=\frac{\ws(|u_k|^2)}{|u_k|^2}.
\label{J2}\end{equation}
This has the same smoothness and positivity properties at and around zero as $J$ \eqref{J1} does.
We also use $\xi(u)$ \eqref{xi(u)} and the functions $(x_j-x_\ell)(\xi)$ determined by $x_{k+1}-x_k=\xi_k$.

\begin{definition}
\label{def:4.1}
Take a type (i) $y$ from \eqref{typeI-y} and represent the points of $\CP^{n-1}$ as
$\pi_{|M|}(u)$ with $u\in S^{2n-1}_{|M|}$. For $M>0$ \eqref{M}, define the smooth
functions $\cL_{j,\ell}^{y,+}$ on $\CP^{n-1}$ by
\begin{align}
\cL_{j,\ell}^{y,+}(\pi_{\vert M\vert}(u))&=\begin{cases}
\Lambda_{j,j+p}^y(\xi(u)),&\text{if}\ 1\leq j\leq n-p,\ \ell=j+p,\\
\Lambda_{j,j+p-n}^y(\xi(u)),&\text{if}\ n-p<j\leq n,\ \ell=j+p-n,\\
\bar u_ju_{\ell+n-p}\Lambda_{j,\ell}^y(\xi(u)),&\text{if}\ 1\leq j\leq n,\ 1\leq\ell\leq p,\ \ell\neq j+p-n,\\
\bar u_ju_{\ell-p}\Lambda_{j,\ell}^y(\xi(u)),&\text{if}\ 1\leq j\leq n,\ p<\ell\leq n,\ \ell\neq j+p,
\end{cases}
\label{L^y+IV}
\end{align}
with $\Lambda_{j,\ell}^y$ given in Appendix \ref{sec:A}. For $M<0$, set $\cL_{j,\ell}^{y,-}$
to be
\begin{equation}
\cL_{j,\ell}^{y,-}(\pi_{\vert M\vert}(u))=\cL_{j,\ell}^{y,+}(\pi_{\vert M\vert}(\bar u))
\label{L^y-IV}
\end{equation}
with the understanding that in this case $y>p\pi/n$. Finally,
define the $\lambda$-dependent elliptic Lax matrix $L^{y,\pm}$ on $\CP^{n-1}$ by
\begin{equation}
L_{j,\ell}^{y,\pm}(\pi_{\vert M\vert}(u)\vert\lambda)=
\frac{\ws((x_j-x_\ell)(\xi(u))+\lambda)}{\ws(\lambda)}
\cL_{j,\ell}^{y,\pm}(\pi_{\vert M\vert}(u)),
\label{L^y}
\end{equation}
where $u$ runs over $S^{2n-1}_{\vert M\vert}$ and the spectral parameter $\lambda$
varies in $\C\setminus\{\omega_{m,m'}:m,m'\in\Z\}$.
\end{definition}

\begin{theorem}
\label{theor:4.2}
The spectral parameter dependent elliptic Lax matrix $L^{y,\pm}(\pi_{\vert M\vert}(u)\vert\lambda)$
\eqref{L^y} is a smooth global extension of $L_y^\loc(\xi,e^{\ri \theta}\vert\lambda)$
\eqref{L_y^loc-IV} to the complex projective space $\CP^{n-1}$ since it satisfies
\begin{equation}
L^{y,\pm}((\pi_{|M|}\circ \cE)(\xi,e^{\ri\theta})\vert\lambda)
=\Delta(e^{\ri\theta})^{-1}L_y^\loc(\xi,e^{\ri\theta}\vert\lambda)
\Delta(e^{\ri\theta}),\quad
\forall(\xi,e^{\ri\theta})\in\cA_y^+\times\T^{n-1},
\label{L^y-circ-pi_M-circ-cE}
\end{equation}
where $\Delta$ is given by \eqref{Delta} and
$\pi_{\vert M\vert}\circ\cE\colon\cA_y^+\times\T^{n-1}\to \CP^{n-1}$
is the symplectic embedding defined in Section \ref{sec:2}.
\end{theorem}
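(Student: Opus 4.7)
The plan is to reduce the statement to Theorem \ref{theor:3.4} by factoring out the spectral parameter dependence, and then to rely on the fact that $\ws$ and $\sin$ share the pointwise properties on the real axis (zeros, signs, parity, antiperiodicity) that were invoked throughout Section \ref{sec:3}.

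First, in both \eqref{L_y^loc-IV} and \eqref{L^y} the spectral parameter $\lambda$ enters only through a single scalar factor: on the local side it is $\ws(x_j-x_\ell+\lambda)/\ws(\lambda)$, and on the global side it is $\ws((x_j-x_\ell)(\xi(u))+\lambda)/\ws(\lambda)$. When $u=\cE(\xi,e^{\ri\theta})$, formulas \eqref{u-abs-squared}--\eqref{xi(u)} give $\xi(u)=\xi$, so these scalars coincide, and conjugation by the diagonal matrix $\Delta(e^{\ri\theta})$ leaves them untouched. Proving \eqref{L^y-circ-pi_M-circ-cE} therefore reduces to the $\lambda$-independent identity comparing $\cL^{y,\pm}$ with the $\lambda$-independent part of $L_y^\loc$, i.e.\ with the matrix whose entries are $(\ws(y)/\ws(x_j-x_\ell+y))v_j(\xi,y)v_\ell(\xi,-y)\rho(\theta)_\ell$.

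Second, this reduced identity is essentially the content of Theorem \ref{theor:3.4} under the substitution $\sin\to\ws$ (with $J\to\cJ$ in \eqref{J1}, \eqref{J2}). The sign analysis proving $z_\ell(\xi,\pm y)>0$ on $\cA_y^+$ \eqref{z_ell(xi,pmy)-IV}, the factorizations $v_j(\xi,y)=|u_j|w_j(\xi,y)$ and $v_\ell(\xi,-y)=|u_{\ell+n-p}|w_\ell(\xi,-y)$ (resp.\ $|u_{\ell-p}|w_\ell(\xi,-y)$), and the positivity and smoothness of the $w$-functions all carry over verbatim, because they use only the real-axis properties of $\sin$ that are shared by $\ws$. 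Smoothness of $\cL^{y,\pm}$ at boundary points where some $|u_k|^2$ vanishes follows exactly as in Remark \ref{rem:3.2}: $\cJ(|u_k|^2)=\ws(|u_k|^2)/|u_k|^2$ is real-analytic and strictly positive on $[0,|M|]$ thanks to the bound $|M|<\pi$ entailed by \eqref{typeI-y}. The trigonometric Cauchy factor $\sin(y)/\sin(x_j-x_\ell+y)$ is replaced by $\ws(y)/\ws(x_j-x_\ell+y)$, whose zero and sign structure on the closed simplex $\cA_y$ matches the trigonometric one factor by factor. Finally, the conjugation by $\Delta(e^{\ri\theta})$ reduces, as in the proof of Theorem \ref{theor:3.4}, to the recursion \eqref{Delta-recursion} and then to the integer-valued identities \eqref{Omega-1}--\eqref{Omega-5} for $\Omega=B-C$; these are purely combinatorial and have already been established via Proposition \ref{prop:2.1}, so no reverification is needed.

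The main obstacle is bookkeeping rather than substance: one must confirm, factor by factor in the explicit formulas of Appendix \ref{sec:A}, that the $\sin\to\ws$ substitution preserves every sign, every positivity assertion, and every smoothness claim used in Section \ref{sec:3}. This is mechanical given that $\ws$ is odd, has the same simple real zeros as $\sin$ (up to the normalization $\omega=\pi/2$), satisfies $\ws(z+\pi)=-\ws(z)$, and is positive on $(0,\pi)$. Once this transfer is carried out, the smoothness of $L^{y,\pm}(\cdot|\lambda)$ on all of $\CP^{n-1}$ for $\lambda\notin\{\omega_{m,m'}\}$ and the intertwining identity \eqref{L^y-circ-pi_M-circ-cE} follow, completing the proof.
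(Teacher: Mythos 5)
Your proposal is correct and follows essentially the same route as the paper, which simply states that the proof ``follows the lines of the proof of Theorem \ref{theor:3.4}'' after noting that $\ws$ and $\sin$ share the relevant zeros, signs, parity and antiperiodicity on the real axis. Your explicit observation that $\lambda$ enters both \eqref{L_y^loc-IV} and \eqref{L^y} only through the entrywise scalar $\ws(x_j-x_\ell+\lambda)/\ws(\lambda)$, which matches under $\cE$ and is inert under diagonal conjugation, is exactly the reduction the paper leaves implicit in the structure of Definition \ref{def:4.1}.
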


The proof of Theorem \ref{theor:4.2} follows the lines of the proof of Theorem \ref{theor:3.4}.
The characteristic polynomial $\det\big(L^{y,\pm}(\pi_{\vert M\vert}(u)\vert\lambda)-\alpha\1_n\big)$
of the global Lax matrix depends smoothly on $\pi_{\vert M\vert}(u)\in \CP^{n-1}$ and as a
consequence of \eqref{L^y-circ-pi_M-circ-cE} it satisfies
\begin{equation}
\det\big(L^{y,\pm}((\pi_{|M|}\circ\cE)(\xi,e^{\ri\theta})\vert\lambda)
-\alpha \1_n\big)=
\det\big(L_y^\loc(\xi,e^{\ri\theta}\vert\lambda)-\alpha\1_n\big).
\end{equation}
Since this holds for all $\alpha$ and $\lambda$, we see that the local IV$_\b$
Hamiltonian \eqref{H_y^loc-IV} together with its constants of motion $\Re(\cS_k^\loc)$,
$k=2,\dots,n-1$ extends to an integrable system on $\CP^{n-1}$. This was pointed out
previously \cite{RuijR} for the special case $0<y<\pi/n$ in \eqref{typeI-y}.

In the trigonometric limit $-\ri\omega'\to\infty$ the $\ws$-function becomes the sine function,
and we obtain a spectral parameter dependent trigonometric Lax matrix from the elliptic one.
Then, setting the spectral parameter to be on the imaginary axis and taking the limit
$-\ri\lambda\to\infty$ reproduces, up to conjugation by a diagonal matrix, the trigonometric
global Lax matrix of Definition \ref{def:3.3}. Correspondingly, the global extension of the
IV$_\b$ Hamiltonian \eqref{H_y^loc-IV} and its commuting family reduces to the global extension
of the III$_\b$ Hamiltonian \eqref{H_y^loc} and its constants of motion.

\section{Conclusion and outlook}
\label{sec:5}

In this paper we have demonstrated by direct construction that the local phase space
$\cA_y^+\times\T^{n-1}$ of the III$_\b$ and IV$_\b$ RS models (where $\cA_y^+$ is the
interior of the simplex \eqref{3.11}) can be embedded into $\CP^{n-1}$ for any type (i)
coupling $y$ \eqref{typeI-y} in such a way that a suitable conjugate of the local Lax
matrix extends to a smooth (actually real-analytic) function. Theorems \ref{theor:3.4}
and \ref{theor:4.2} together with Appendix \ref{sec:A} provide explicit formulas for
the resulting global Lax matrices. Their characteristic polynomials give rise to Poisson
commuting real Hamiltonians on $\CP^{n-1}$ that yield the Liouville integrable compactified
trigonometric and elliptic RS systems.

Our direct construction was inspired by the earlier derivation of compactified
III$_\b$ systems by quasi-Hamiltonian reduction \cite{FKl}. The reduction identifies
the III$_\b$ system with a topological Chern-Simons field theory for any generic
coupling parameter $y$. It appears natural to ask if an analogous derivation and
relation to some topological field theory could exist for IV$_\b$ systems, too.
We also would like to obtain a better understanding of the type (ii) trigonometric
systems and their possible elliptic analogues.

In the near future, we wish to explore the classical dynamics and quantization of
the III$_\b$ systems. This is partially motivated by the possibility to associate
new random matrix ensembles with these systems \cite{Bog}. For arbitrary type (i) couplings, geometric
quantization yields the joint spectra of the quantized action variables effortlessly
\cite{FK2}. (It is necessary to introduce a second parameter into the systems before
quantization, which can be achieved by taking an arbitrary multiple of the symplectic form.)
The joint eigenfunctions of the quantized RS Hamiltonian and its commuting family
should be derived by generalizing the results of van Diejen and Vinet \cite{vDV}.

Besides further studying the systems that we described, it would be also interesting
to search for compactifications of generalized RS systems. We have in mind especially
the $\mathrm{BC}_n$ systems due to van Diejen \cite{vD} and the recently introduced
supersymmetric systems \cite{susy}. Regarding the former case, and even for general
root systems, the results of \cite{vDE} could be relevant, as well as the construction
of Lax matrices for some of the $\mathrm{BC}_n$ systems reported in \cite{PG}.

Throughout the text, we worked in the `center-of-mass frame' and now we end by a
comment on how the center-of-mass coordinate can be introduced into our systems.
One possibility is to take the full phase space to be the Cartesian product
of $\CP^{n-1}$ with $\UN(1)\times\UN(1)=\{(e^{2\ri X},e^{\ri\Phi})\}$
endowed with the symplectic form $|M|\,\omega_{\mathrm{FS}}+dX\wedge d\Phi$.
Here, $e^{2\ri X}$ is interpreted as a center-of-mass variable for the $n$
particles on the circle. Then $n$ functions in involution result by adding an
arbitrary function of $e^{\ri\Phi}$ to the $(n-1)$ commuting Hamiltonians
generated by the `total Lax matrix' $e^{-\ri \Phi}L^{y,\pm}$. On the dense open
domain the total Lax matrix is obtained by replacing $\rho(\theta)$ in \eqref{L_y^loc-IV}
by $\rho(\theta)e^{-\ri\Phi}$. By setting $e^{\ri\Phi}$ to $1$ and quotienting
by the canonical transformations generated by the functions of $e^{\ri\Phi}$ one
recovers the phase space of the relative motion, $\CP^{n-1}$. There are also several
other possibilities, as was discussed for analogous situations in \cite{RIMS95,FA}.
For example, one may replace $\UN(1)\times\UN(1)$ by its covering space $\R\times\R$.

\bigskip
\begin{acknowledgements}
We thank B.G.~Pusztai for helpful comments and S.~Ruijsenaars for useful
discussions.
This work was supported in part by the Hungarian Scientific Research Fund (OTKA)
under the grant K-111697 and by COST (European Cooperation in Science and Technology)
in COST Action MP1405 QSPACE.
\end{acknowledgements}

\newpage
\appendix

\section{Explicit form of the functions $\boldsymbol{\Lambda_{j,\ell}^y}$}
\label{sec:A}

In this appendix we display the building blocks \eqref{L^y+IV} of the global
elliptic Lax matrix explicitly. Below, $\xi$ varies in the closed simplex $\cA_y$
associated with a type (i) coupling $y$ \eqref{typeI-y} for fixed $p$ and $M$.
The function $\cJ$ was defined in \eqref{J2}. The trigonometric case is obtained
by simply replacing the $\ws$-function \eqref{sigma} everywhere by the sine function.

\smallskip
\noindent\underline{Special components:}
For $1\leq j\leq n-p$
$$
\Lambda_{j,j+p}^y(\xi)=-\sgn(M)\ws(y)
\frac{\big[\prod_{\substack{m=1\\(m\neq p)}}^{n-1}
\ws(\sum_{k=j}^{j+m-1}\xi_k-y)
\ws(\sum_{k=j+p}^{j+p+n-m-1}\xi_k+y)\big]^{\tfrac{1}{2}}}
{\prod_{m=1}^{n-1}\big[\ws(\sum_{k=j}^{j+m-1}\xi_k)
\ws(\sum_{k=j+p}^{j+p+m-1}\xi_k)\big]^{\tfrac{1}{2}}}.
$$
For $n-p<j\leq n$
$$
\Lambda_{j,j+p-n}^{y}(\xi)=\sgn(M)\ws(y)
\frac{\big[\prod_{\substack{m=1\\(m\neq p)}}^{n-1}
\ws(\sum_{k=j}^{j+m-1}\xi_k-y)
\ws(\sum_{k=j+p-n}^{j+p-m-1}\xi_k+y)\big]^{\tfrac{1}{2}}}
{\prod_{m=1}^{n-1}\big[\ws(\sum_{k=j}^{j+m-1}\xi_k)
\ws(\sum_{k=j+p-n}^{j+p-m-1}\xi_k)\big]^{\tfrac{1}{2}}}.
$$
\underline{Diagonal components:}
For $1\leq j=\ell\leq p$
$$
\Lambda_{j,j}^y(\xi)=\big[\cJ(|u_j|^2)\cJ(|u_{j+n-p}|^2)\big]^{\tfrac{1}{2}}
\frac{\big[\prod_{\substack{m=1\\(m\neq p)}}^{n-1}
\ws(\sum_{k=j}^{j+m-1}\xi_k-y)
\ws(\sum_{k=j}^{j+n-m-1}\xi_k+y)\big]^{\tfrac{1}{2}}}
{\prod_{m=1}^{n-1}\ws(\sum_{k=j}^{j+m-1}\xi_k)}.
$$
For $p<j=\ell\leq n$
$$
\Lambda_{j,j}^y(\xi)=\big[\cJ(|u_j|^2)\cJ(|u_{j-p}|^2)\big]^{\tfrac{1}{2}}
\frac{\big[\prod_{\substack{m=1\\(m\neq p)}}^{n-1}
\ws(\sum_{k=j}^{j+m-1}\xi_k-y)
\ws(\sum_{k=j}^{j+n-m-1}\xi_k+y)\big]^{\tfrac{1}{2}}}
{\prod_{m=1}^{n-1}\ws(\sum_{k=j}^{j+m-1}\xi_k)}.
$$
\underline{Components above the diagonal:}
For $1\leq j<\ell\leq p$
$$
\Lambda_{j,\ell}^y(\xi)
=\ws(y)\big[\cJ(|u_j|^2)\cJ(|u_{\ell+n-p}|^2)\big]^{\tfrac{1}{2}}
\frac{\big[\prod_{\substack{m=1\\(m\neq\ell-j,p)}}^{n-1}
\ws(\sum_{k=j}^{j+m-1}\xi_k-y)
\ws(\sum_{k=\ell}^{\ell+n-m-1}\xi_k+y)\big]^{\tfrac{1}{2}}}
{\prod_{m=1}^{n-1}\big[\ws(\sum_{k=j}^{j+m-1}\xi_k)
\ws(\sum_{k=\ell}^{\ell+m-1}\xi_k)\big]^{\tfrac{1}{2}}}.
$$
For $1\leq j<\ell\leq n$ with $p<\ell$ and $\ell\neq j+p$
$$
\Lambda_{j,\ell}^y(\xi)=\frac{\ws(y)\big[\cJ(|u_j|^2)
\cJ(|u_{\ell-p}|^2)\big]^{\tfrac{1}{2}}}{\sgn(j+p-\ell)}
\frac{\big[\prod_{\substack{m=1\\(m\neq\ell-j,p)}}^{n-1}
\ws(\sum_{k=j}^{j+m-1}\xi_k-y)
\ws(\sum_{k=\ell}^{\ell+n-m-1}\xi_k+y)\big]^{\tfrac{1}{2}}}
{\prod_{m=1}^{n-1}\big[\ws(\sum_{k=j}^{j+m-1}\xi_k)
\ws(\sum_{k=\ell}^{\ell+m-1}\xi_k)\big]^{\tfrac{1}{2}}}.
$$
\underline{Components below the diagonal:}
For $1\leq\ell<j\leq n$ with $\ell\leq p$ and $\ell\neq j+p-n$
$$
\Lambda_{j,\ell}^y(\xi)
=\frac{\ws(y)\big[\cJ(|u_j|^2)
\cJ(|u_{\ell+n-p}|^2)\big]^{\tfrac{1}{2}}}{\sgn(\ell+n -j- p )}
\frac{\big[\prod_{\substack{m=1\\(m\neq j-\ell,p)}}^{n-1}
\ws(\sum_{k=j}^{j+n-m-1}\xi_k-y)
\ws(\sum_{k=\ell}^{\ell+m-1}\xi_k+y)\big]^{\tfrac{1}{2}}}
{\prod_{m=1}^{n-1}\big[\ws(\sum_{k=j}^{j+m-1}\xi_k)
\ws(\sum_{k=\ell}^{\ell+m-1}\xi_k)\big]^{\tfrac{1}{2}}}.
$$
For $p<\ell<j\leq n$
$$
\Lambda_{j,\ell}^y(\xi)=\ws(y)\big[\cJ(|u_j|^2)
\cJ(|u_{\ell-p}|^2)\big]^{\tfrac{1}{2}}
\frac{\big[\prod_{\substack{m=1\\(m\neq j-\ell,p)}}^{n-1}
\ws(\sum_{k=j}^{j+n-m-1}\xi_k-y)
\ws(\sum_{k=\ell}^{\ell+m-1}\xi_k+y)\big]^{\tfrac{1}{2}}}
{\prod_{m=1}^{n-1}\big[\ws(\sum_{k=j}^{j+m-1}\xi_k)
\ws(\sum_{k=\ell}^{\ell+m-1}\xi_k)\big]^{\tfrac{1}{2}}}.
$$

\newpage

\end{document}